\newcommand{\bmat}{\left[\begin{matrix}}
\newcommand{\emat}{\end{matrix}\right]}
\newtheorem{theorem}{Theorem}
\newtheorem{proposition}[theorem]{Proposition}
\newtheorem{lemma}[theorem]{Lemma}
\theoremstyle{remark}
\newtheorem*{remark}{Remark}
\theoremstyle{definition}
\newcommand{\Z}{\mathbb{Z}}
    \title{A stochastic variant of the abelian sandpile model}
    \author{Seungki Kim and Yuntao Wang}
\begin{document}
\maketitle

\begin{abstract}
We introduce a natural stochastic extension, called \emph{SSP}, of the abelian sandpile model(ASM), which shares many mathematical properties with ASM, yet radically differs in its physical behavior, for example in terms of the shape of the steady state and of the avalanche size distribution. We establish a basic theory of SSP analogous to that of ASM, and present a brief numerical study of its behavior.

Our original motivation for studying SSP stems from its connection to the LLL algorithm established in another work by the authors \cite{DKTW}. The importance of understanding how LLL works cannot be stressed more, especially from the point of view of lattice-based cryptography. We believe SSP serves as a tractable toy model of LLL that would help further our understanding of it.
\end{abstract}

\section{Introduction}

\subsection{Overview}


Let us start by recalling ASM (\cite{BTW87}, \cite{Dhar90}, also see \cite{Dhar06}) on a one-dimensional lattice. Let $A_L$ be the cycle graph whose vertices $V(A_L)$ consist of $L+1$ elements, say $V(A_L) = \{v_1, \ldots, v_{L+1}\}$, and the (undirected) edges are defined by $E(A_L) = \{(v_i, v_{i+1}): i = 1, \ldots, L\} \cup \{(v_{L+1}, v_1)\}$. We designate $v_{L+1}$ as the \emph{sink}. Fix two positive integers $T$ and $I$, preferably $T \geq 2I$.

We temporarily refer to $I$ as the ``toppling strength'' and $T$ as the \emph{threshold}. That is, toppling at site $v_i$ is to subtract $2I$ grains from the pile on $v_i$ and add $I$ grains to each of its neighbors. A configuration on $A_{L}$ is called \emph{stable} if and only if all non-sink vertices have $< T$ grains. Fix another positive integer $J$ coprime to $I$, preferably $J \approx I$. One then considers the following Markov chain on the space of the stable configurations of ASM: given a stable configuration on $A_L$, obtain the next stable configuration by first adding $J$ grains to any randomly chosen non-sink site, and then stabilizing the resulting configuration.\footnote{The only reason to impose the condition $(J, I) = 1$ is to prevent the pile heights at each site from being concentrated on a select few congruence classes modulo $I$; it is not so much an essential condition as a cosmetic one.}

On the other hand, consider the following simple stochastic variant of ASM, which we named \emph{SSP}. SSP is exactly the same as ASM except for the toppling procedure: first one samples $\gamma$ uniformly from $\{1, 2, \ldots, 2I\}$ (so that $\gamma$ has mean $\approx I$) and then subtract $2\gamma$ grains from site $v_i$ and add $\gamma$ to each of the two neighboring sites. One can of course associate a Markov chain to SSP analogously to the above discussion: add $I$ grains to a random site and then stabilize.

It is rather surprising that this simple and natural extension of ASM has never been considered in the literature so far. In fact, our motivation for studying SSP does not come from physics, but from a study of the practical behavior of the LLL algorithm \cite{LLL82}. The LLL algorithm is one of the fundamental tools in computational mathematics used for lattice basis reduction, with numerous applications to number theory and cryptography --- see the book \cite{LLLbook} for an extensive treatment on the subject. Despite its celebrated status, much of its behavior in practice has been shrouded in complete mystery. For example, the most well-known problem concerns the quantity called the \emph{root Hermite factor}(RHF), defined as

\begin{equation} \label{eq:rhf}
\log\mbox{(RHF)} = \frac{1}{(L+1)^2}\sum_{i=1}^{L}(L+1-i)r_i,
\end{equation}
where $r_i$ is the pile height at $v_i$, in sandpile language. It is a theorem from \cite{LLL82} that the RHF of LLL has a sharp upper bound by $(4/3)^{1/4} \approx 1.075$, but empirically one observes RHF $\approx 1.02$ most of the time --- see \cite{NS06} for details. This phenomenon has been well-known since the birth of LLL in 1982, yet there has been not even a vague heuristic as to why this must happen.

Recently, we argued in \cite{DKTW} that LLL behaves like a sandpile model, and that this idea explains much of its practical behavior all at once, in particular this RHF problem. The remaining difficulty is that LLL as a sandpile model is nonabelian, which is difficult to study analytically. Hence we invented SSP, the closest abelian version of LLL, to be used as a toy model.

Figure \ref{fig:lllssp} compares the empirical average ``steady state density'' of LLL and SSP. The resemblance therein should come as surprising, since the two algorithms originate from very different fields of study, lattice reduction and statistical physics, respectively. We hope that the present paper helps build a bridge between them, to the benefit of both areas.

\begin{figure}
\includegraphics[scale=0.45]{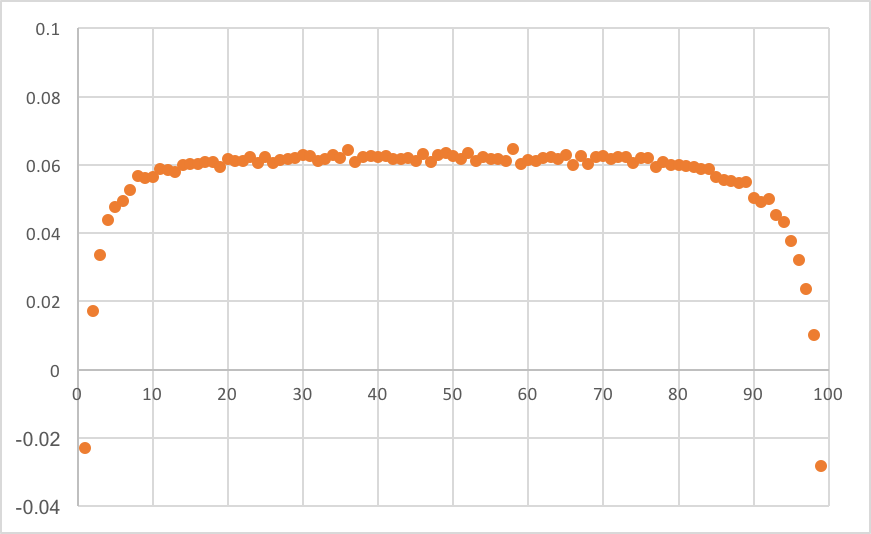}
\includegraphics[scale=0.45]{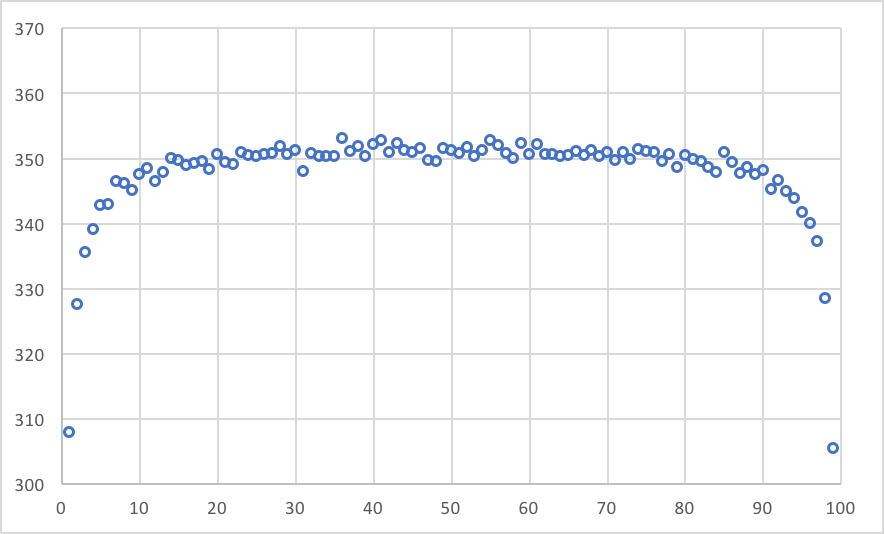}
\caption{The empirical average ``steady state density'' of LLL (left) and SSP (right), with $L = 100$. $x-$ and $y-$ axes represent the site index and the height, respectively. It would be ideal to take a higher $L$, but then running LLL would become cumbersome, which has complexity $O(L^5)$.}
\label{fig:lllssp}
\end{figure}



\subsection{In this paper}

In the next section, we develop a basic mathematical theory of SSP. As in the case of ASM, we impose a monoid structure on the set of stable configurations, which we show is abelian (Theorem \ref{thm:monoid}), and we prove the existence of the unique steady state (Theorem \ref{thm:steady}). Many of our results can be proved using the operator algebra method e.g. as in \cite{SD09}, but here we introduce a different idea, which is useful for picturing the ``shape'' of the steady state. With its help we are able to prove that the average $\log(\mathrm{RHF})$ of SSP over the steady state is strictly bounded away from the worst-case by a constant independent of system size $L$ (Proposition \ref{prop:rhf}). This is precisely what one wants to show with LLL, and hopefully our idea generalizes to this case, the only obstacle being its non-abelian toppling rule.

In Section 3, we provide some preliminary numerical studies on the statistical behavior of SSP, concerning its avalanche size distribution and its steady state. Though we still need a more extensive investigation, it seems that SSP has much in common with other one-dimensional stochastic models, such as the abelian Manna model and the Oslo rice-pile model. From the LLL perspective, it is especially interesting that both the Oslo and the Manna models exhibit the same boundary phenomenon as can be seen in Figure \ref{fig:lllssp} (see \cite{GDM16} and \cite{SD09}).



\subsection{Acknowledgment}

We thank Deepak Dhar, Phong Nguyen, and Su-Chan Park for helpful comments and suggestions.


\section{Mathematical properties of SSP}

\subsection{Setup}

In the literature on sandpiles, there seem to exist two sets of notations, one that is physics-oriented and one that is mathematics-oriented --- heights versus configurations, sites versus vertices, relaxations versus stabilizations, and so on. This paper mostly employs the latter, and there are several words that we made up ourselves. In any case, we provide the definition of every term that we use below.

Like ASM, SSP is played on a finite undirected graph $G = (V,E)$, equipped with a designated vertex called \emph{sink}, which we denote here by $s$. Throughout this paper, we assume that the graph obtained from $G$ by removing $s$ is connected. In addition, let $\wp$ be a probability mass function supported on a finite subset of $\mathbb{Z}_{>0}$, and let $T$ be an integer satisfying $T \geq \max\{i: i \in \mathrm{supp}\, \wp\}$, which we sometimes refer to as the \emph{threshold}.

A \emph{pure configuration} is a function $c: V \backslash \{s\} \rightarrow \mathbb{Z}$. A \emph{(mixed) configuration} is a formal finite sum of pure configurations of form $\sum p_i [c_i]$, where $p_i$ are positive real numbers such that $\sum p_i = 1$, and $c_i$ are pure configurations. The role the bracket notation $[.]$ here is to clarify that this is a \emph{formal} sum. We define configurations this way, since toppling in SSP leads to multiple probabilistic results, as will be explained below.

A pure configuration $c$ is called \emph{stable} if $c(v) < T$ for all $v \in V \backslash \{s\}$, and called \emph{nonnegative} if $c(v) \geq 0$ for all $v \in V \backslash \{s\}$. A configuration is called \emph{stable} or \emph{nonnegative} if it is a formal sum of stable or nonnegative pure configurations, respectively. 

Toppling a pure configuration $c$ at a non-sink vertex $v$ goes as follows: i) sample $\gamma$ from $I$ according to $\wp$, ii) subtract $\gamma \cdot \mathrm{deg}(v)$ from $c(v)$ iii) for each $w \sim v$ add $\gamma \cdot \mathrm{deg}(v,w)$ to $c(w)$, where $\mathrm{deg}(v,w)$ is the number of edges connecting $v$ and $w$. Denote the outcome by $T_v(c)$. If $c(v) \geq T$, we call this toppling \emph{legal}. It is clear that, unless $\wp$ is supported on a singleton set, toppling a pure configuration results in a mixed configuration.

A \emph{choice algorithm} is a map $\mathcal{A}: \{\mbox{unstable pure configurations}\} \rightarrow V \backslash \{s\}$, such that $c(\mathcal{A}(c)) \geq T$. Fixing a choice algorithm $\mathcal{A}$, we can define what it means to topple a mixed configuration $C = \sum p_i[c_i]$. For a pure configuration $c$, define

\begin{equation*}
T(c) := T_{\mathcal{A}(c)}.
\end{equation*}

Then we define $T(C) := \sum p_i T(c_i)$, the outcome of toppling $C$ once. The \emph{stabilization} $C^\circ$ of $C$ is the outcome of toppling $C$ repeatedly until it becomes stable. Later we will show that $C^\circ$ is independent of $\mathcal{A}$.

\begin{remark}
It is possible to extend the definition of a choice algorithm to a map into the set of probability mass functions on $V\backslash\{s\}$, so that $\mathcal{A}(c)(v) > 0$ only if $c(v) \geq T$, and
\begin{equation*}
T(c) := \sum_{v \in V\backslash\{s\}} \mathcal{A}(c)(v)T_v(c).
\end{equation*}

All the results in this paper carry over to this generalization.
\end{remark}




As in the case of ASM, we can define an operation $\oplus$ on the set of nonnegative stable configurations. If $c$ and $d$ are pure stable configurations, we define $c \oplus d = (c+d)^\circ$, where $c+d$ is defined so that $(c+d)(v) = c(v) + d(v)$. In general, if $C = \sum p_ic_i$ and $D = q_jd_j$, then we define $C + D = \sum_{i,j} p_iq_j[c_i + d_j]$ and accordingly $C \oplus D = (C + D)^\circ = \sum_{i,j} p_iq_j(c_i \oplus d_j)$.

Because we assumed that $T \geq \max\{i: i \in I\}$, for any nonnegative configurations $c$ and $d$, $c \oplus d = (c+d)^\circ$ is nonnegative as well. Later, we will show that $\oplus$ is abelian and associative, that the set of nonnegative stable configurations form a semigroup under this operation, whose minimal ideal consists of a single element which is the steady state.

\subsection{Stabilization is independent of choice algorithms}

We start by defining what we temporarily call an \emph{increment stack}. An increment stack consists of $|V \backslash \{s\}|$ sequences, each of which is indexed by an element of $V \backslash \{s\}$, whose entries are chosen from the elements of $\mathrm{supp}\,\wp$; thus it has the form $S = (\gamma_{v,i})_{v \in V \backslash \{s\} \atop i \geq 1}$, with each $\gamma_{v,i} \in \mathrm{supp}\, \wp$. The set of all increment stacks is given a topology and a measure as follows: the cylinder sets, i.e. sets of form $Y(m, (v_k, i_k, p_k)_{k=1}^{m}) = \{(\gamma_{v,i})_{v \in V \backslash \{s\} \atop i \geq 1} : \gamma_{v_k,i_k} = p_k, k = 1, \ldots, m\}$, form the base for the closed sets, and the measure $\mu$ is defined by $\mu(Y(m, (v_k, i_k, p_k)_{k=1}^{m})) = \prod_{k=1}^m \wp(p_k)$. This notion simulates the randomness in the amount being toppled during a stabilization process: SSP is equivalent to first sampling an increment stack $S$ according to $\mu$, and if one must topple at $v$ for the $i$-th time, remove $\gamma_{v,i} \cdot \mathrm{deg}(v)$ unit of sand from $v$ and distribute it equally among the neighbors of $v$ (respecting the multiplicities).

For a function $k: V\backslash\{s\} \rightarrow \Z_{\geq 0}$, a \emph{substack} $S(k(v))$ of an increment stack $S = (\gamma_{v,i})_{v \in V \backslash \{s\} \atop i \geq 1}$ is a finite-length sequence of form $(\gamma_{v,i(v)})_{v \in V \backslash \{s\} \atop 1 \leq i(v) \leq k(v)}$. We define the following partial order on the set of substacks: $S(k(v)) \leq S(l(v))$ if $k(v) \leq l(v)$ for $v \in V\backslash\{s\}$. Given a pure configuration $c$, if for every $v \in V\backslash\{s\}$
\begin{equation*}
c(v) - \sum_{i=1}^{k(v)} \gamma_{v,i} \cdot \mathrm{deg}(v) + \sum_{w \sim v} \sum_{i=1}^{k(w)} \gamma_{w,i}\mathrm{deg}(w,v) < T,
\end{equation*}
or equivalently, if toppling, legally or illegally, according to $S(k(v))$ stabilizes $c$ --- then $S(k(v))$ is called a \emph{stabilizing substack} of $c$.

\begin{lemma} \label{lemma:mss}
For any given increment stack $S$ and pure configuration $c$, there exists a unique minimal stabilizing substack of $c$.
\end{lemma}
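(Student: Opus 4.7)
The plan is to first establish existence of a stabilizing substack, then prove that the set of stabilizing substacks is closed under componentwise minimum, and finally extract a unique minimum.

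\emph{Existence.} I would adapt the standard ASM argument. Under the assumption that $V\setminus\{s\}$ is connected, one can build a positive function $\phi:V\setminus\{s\}\to\Z_{>0}$ (with $\phi(s):=0$) such that $\mathrm{deg}(v)\phi(v)-\sum_{w\sim v}\mathrm{deg}(v,w)\phi(w)\geq 1$ for every non-sink $v$, for example one proportional to the expected hitting time of $s$ from $v$ by the simple random walk on $G$. Carry out legal topplings on $c$ using $S$ and any choice algorithm; each one at $v$ with amount $\gamma_{v,i}\geq 1$ strictly decreases the potential $\Phi(c):=\sum_{v\neq s}\phi(v)c(v)$ by at least $1$, while $\Phi$ stays within a finite window along the trajectory. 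Hence the process halts, and the resulting toppling counts yield a stabilizing substack.

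\emph{Closure under componentwise minimum.} Suppose $S(k_1), S(k_2)$ are stabilizing substacks of $c$ and put $k:=\min(k_1,k_2)$ pointwise. Fix any $v\in V\setminus\{s\}$ and pick $j\in\{1,2\}$ with $k(v)=k_j(v)$. With the height abbreviated by
\[
H_k(v):=c(v)-\sum_{i=1}^{k(v)}\gamma_{v,i}\mathrm{deg}(v)+\sum_{w\sim v}\sum_{i=1}^{k(w)}\gamma_{w,i}\mathrm{deg}(w,v),
\]
one computes
\[
H_{k_j}(v)-H_k(v)=\sum_{w\sim v}\sum_{i=k(w)+1}^{k_j(w)}\gamma_{w,i}\mathrm{deg}(w,v)\geq 0,
\]
because the ``diagonal'' term vanishes thanks to $k(v)=k_j(v)$, while $k(w)\leq k_j(w)$ everywhere else and the $\gamma$'s are positive. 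Hence $H_k(v)\leq H_{k_j}(v)<T$, proving that $S(k)$ stabilizes $c$.

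\emph{Uniqueness.} For each $v$, set $m(v):=\min\{k(v):S(k)\text{ stabilizes }c\}$; this is well-defined by well-ordering of $\Z_{\geq 0}$ and by existence. For each $v$ choose a stabilizing $k_v$ with $k_v(v)=m(v)$, and apply closure under pairwise min $|V\setminus\{s\}|-1$ times: $S(m)$ with $m=\min_v k_v$ is then stabilizing, and clearly $m$ is the unique minimum. The only real obstacle in the whole argument is the closure step, which however dissolves into the monotonicity observation above: raising the toppling counts at neighbors of $v$ can only add sand to $v$, and we are free to keep $v$'s own toppling count fixed by picking the index $j$ appropriately.
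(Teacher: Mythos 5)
Your proposal is correct and its core step---that the componentwise minimum of two stabilizing substacks is again stabilizing, justified by the observation that lowering neighbors' toppling counts while fixing $v$'s own count can only lower the resulting height at $v$---is exactly the argument the paper uses (there phrased as a contradiction between two distinct minimal substacks). You additionally supply the termination/existence argument via a potential function, which the paper leaves implicit; this is a welcome completion rather than a divergence.
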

\begin{proof}
Suppose $P := S(k(v))$ and $Q := S(l(v))$ are two distinct minimal stabilizing substacks. Define $r(v) = \min(k(v),l(v))$, and let $R := S(r(v))$ be another substack; by assumption $R < P$ and $R < Q$. Then, for every non-sink $v$,
\begin{equation*}
c(v) - \sum_{i=1}^{r(v)} \gamma_{v,i} \cdot \mathrm{deg}(v) + \sum_{w \sim v} \sum_{i=1}^{r(w)} \gamma_{w,i}\mathrm{deg}(v,w) < T.
\end{equation*}

Therefore $R$ is a stabilizing substack, contradicting the minimality of $P$ and $Q$.
\end{proof}

Given an increment stack $S$ and a pure configuration $c$, a choice algorithm $\mathcal{A}$ induces a stabilizing substack $S(k_\mathcal{A}(v))$, where $k_\mathcal{A}(v)$ is the number of times $\mathcal{A}$ would topple on $v$ until it stabilizes $c$.

\begin{proposition} \label{prop:unique_stab}
Any choice algorithm induces the minimal stabilizing substack.
\end{proposition}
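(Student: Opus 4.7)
My plan is to show, for any choice algorithm $\mathcal{A}$, that the toppling counts $k_\mathcal{A}(v)$ it induces coincide with the counts $m(v)$ defining the unique minimal stabilizing substack $S(m(v))$ supplied by Lemma \ref{lemma:mss}. I will establish the two inequalities $k_\mathcal{A}\ge m$ and $k_\mathcal{A}\le m$ in turn.

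The first inequality is essentially a corollary of Lemma \ref{lemma:mss}. Since $\mathcal{A}$ eventually stabilizes $c$, the induced substack $S(k_\mathcal{A}(v))$ is stabilizing; setting $r(v) = \min(k_\mathcal{A}(v), m(v))$ and rerunning the ``componentwise minimum'' computation from the proof of Lemma \ref{lemma:mss} shows $S(r(v))$ is stabilizing as well, so minimality of $m$ forces $r=m$, i.e.\ $k_\mathcal{A}(v)\ge m(v)$ for every $v$.

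The second inequality is where the legality hypothesis built into the definition of a choice algorithm must be used. I will argue by contradiction via a ``first offending step'' trick. Let $v_1, v_2, \ldots$ be the sites $\mathcal{A}$ selects in order and let $k_t(v)$ count the topplings at $v$ in the first $t$ steps. Assuming some $k_N(v)$ exceeds $m(v)$, pick the least $t$ with $k_t(v_t) > m(v_t)$; by minimality of $t$, this forces $k_{t-1}(v_t) = m(v_t)$ and $k_{t-1}(w) \le m(w)$ for every $w$. Writing out the configuration just before step $t$,
\begin{equation*}
c_{t-1}(v_t) = c(v_t) - \sum_{i=1}^{m(v_t)} \gamma_{v_t,i}\deg(v_t) + \sum_{w \sim v_t}\sum_{i=1}^{k_{t-1}(w)}\gamma_{w,i}\deg(v_t,w),
\end{equation*}
and enlarging the inner sums to $\sum_{i=1}^{m(w)}$ (which can only add nonnegative contributions since each $\gamma_{w,i} \in \mathrm{supp}\,\wp \subset \Z_{>0}$), I recognize the resulting bound as precisely the value of the configuration at $v_t$ after fully applying the minimal substack $S(m(v))$. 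By definition of ``stabilizing,'' that value is strictly less than $T$, so $c_{t-1}(v_t) < T$, contradicting the legality requirement that $\mathcal{A}$ only topple sites of height $\ge T$.

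I expect the only real content to be the monotonicity step in the last paragraph: shrinking any neighbor's toppling count can only decrease the displayed height, because the toppling contributions $\gamma_{w,i}\deg(v_t,w)$ are nonnegative. This is transparent in the present setup, so I do not anticipate a serious obstacle; every other ingredient reduces either to combinatorics of substacks (handled by Lemma \ref{lemma:mss}) or to unwinding the definition of a choice algorithm.
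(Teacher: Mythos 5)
Your proof is correct and follows essentially the same route as the paper's: the paper also tracks the choice algorithm step by step and argues that the first vertex to complete $m(v)$ topplings is already below threshold there (so legality blocks further topplings), which is exactly your ``first offending step'' contradiction made slightly less formal. Your version is a bit more explicit in separating the two inequalities $k_\mathcal{A}\ge m$ and $k_\mathcal{A}\le m$ and in writing out the monotonicity estimate, but the underlying idea is identical.
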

\begin{proof}
Write $M := S(m(v))$ for the minimal stabilizing substack, and $N := S(k_\mathcal{A}(v))$ for the substack induced by a choice algorithm $\mathcal{A}$. Imagine $\mathcal{A}$ toppling an input configuration $c$, one vertex at a time, and consider a situation in which $\mathcal{A}$ has just toppled on $v_1 \in V$ for $m(v_1)$ times, $v_1$ being the first vertex at which this occurs. Observe that, at this point, the configuration is stabilized at $v_1$. Hence $\mathcal{A}$ cannot topple more on $v_1$, at least not until it topples on some neighborhood $w$ of $v_1$ more than $m(w)$ times. 

Suppose $v_2$ is the second vertex at which $\mathcal{A}$ topples on $v_2$ for $m(v_2)$ times. By the same argument as above, $\mathcal{A}$ cannot topple on $v_2$ more than $m(v_2)$ times until it topples on some $w \sim v_2$ more than $m(w)$ times. Repeating, we see that on no vertices $v$ can $\mathcal{A}$ topple more than $m(v)$ times. This proves that $N \leq M$, but since $M$ is minimal we must have $M = N$.
\end{proof}

An immediate consequence of Proposition \ref{prop:unique_stab} is that, for any configuration $C$, pure or non-pure, its stabilization $C^\circ$ is independent of the choice algorithm $\mathcal{A}$. Indeed, $C^\circ$ is determined solely by the measure $\mu$ on the set of all increment stacks, and the minimal stabilizing substack of each increment stack. Therefore, the notion of a choice algorithm is somewhat superfluous from a theoretical viewpoint.

\subsection{$\oplus$ is abelian and associative}

It is clear by definition that $\oplus$ is abelian. We will now show that $\oplus$ is associative as well. Since $(F \oplus G) \oplus H = ((F + G)^\circ + H)^\circ$ and $F \oplus (G \oplus H) = (F + (G + H)^\circ)^\circ$, it suffices to show that

\begin{proposition} \label{prop:assoc}
For nonnegative configurations $C$ and $D$,
\begin{equation} \label{eq:assoc}
(C^\circ + D)^\circ = (C + D)^\circ,
\end{equation}
or equivalently,
\begin{equation*}
C^\circ \oplus D = C \oplus D.
\end{equation*}
\end{proposition}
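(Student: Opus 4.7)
The plan is to establish \eqref{eq:assoc} first for pure nonnegative $c$ and $d$ via a pathwise analysis with a single increment stack, and then extend to mixed configurations by the bilinearity of $+$ and $\oplus$ over formal sums.

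Fix pure nonnegative $c$, $d$, and an increment stack $S$. Let $m_c$ and $m_{c+d}$ denote the minimal stabilizing substacks of $c$ and $c+d$ respectively, furnished by Lemma \ref{lemma:mss}. My first claim is that $m_{c+d}(v) \geq m_c(v)$ for every non-sink $v$. Indeed, if a substack $k$ stabilizes $c+d$, meaning $(c+d)(v) + \Delta_k(v) < T$ with
\[
\Delta_k(v) := -\sum_{i=1}^{k(v)}\gamma_{v,i}\,\mathrm{deg}(v) + \sum_{w \sim v}\sum_{i=1}^{k(w)}\gamma_{w,i}\,\mathrm{deg}(v,w),
\]
then $d(v) \geq 0$ gives $c(v) + \Delta_k(v) < T$ as well, so $k$ already stabilizes $c$, and minimality of $m_c$ forces $k \geq m_c$.

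Next, write $c_S := c + \Delta_{m_c}$ for the deterministic stabilization of $c$ along $S$, and let $S'$ be the shifted stack with entries $\gamma'_{v,i} := \gamma_{v,\,m_c(v)+i}$. The second claim is that $m_{c+d} - m_c$ --- a valid nonnegative substack by the first claim --- is the minimal stabilizing substack of $c_S + d$ with respect to $S'$. Stability reduces to the identity $\Delta^{S'}_{m_{c+d}-m_c}(v) = \Delta^{S}_{m_{c+d}}(v) - \Delta^{S}_{m_c}(v)$; minimality follows because any strictly smaller $m''$ stabilizing $c_S + d$ along $S'$ would give $m_c + m''$ stabilizing $c+d$ along $S$, contradicting the minimality of $m_{c+d}$. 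Thus, for each single stack $S$, the deterministic stabilizations satisfy the pathwise identity $(c+d)_S = (c_S + d)_{S'}$.

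It remains to lift this to mixed configurations. Because $m_c(v)$ is determined solely by the first $m_c(v)$ entries of $S$ at each vertex, and $\mu$ is a product measure by construction, the shifted tail $S'$ is independent of $c_S$ and has the same law as a fresh independent copy of $S$. Integrating the pathwise identity against $\mu$ therefore rewrites $(c+d)^\circ$ as a double integral of $[(c_S + d)_{S'}]$ against two independent stacks, which is by definition $(c^\circ + d)^\circ$. The extension $(C^\circ + D)^\circ = (C + D)^\circ$ for mixed $C = \sum p_i[c_i]$, $D = \sum q_j[d_j]$ then follows by bilinearity, applying the pure case to each pair $(c_i, d_j)$. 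I expect the most delicate step to be this final independence argument --- a vector-valued analogue of the strong Markov property for $\mu$, with ``stopping rule'' given by $v \mapsto m_c(v)$ --- but once it is made rigorous against the cylinder-set topology introduced in the previous subsection, the remainder is bookkeeping.
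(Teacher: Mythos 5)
Your proposal is correct and follows essentially the same route as the paper's own proof: fix a single increment stack, use nonnegativity to show the minimal stabilizing substack of $c+d$ dominates that of $c$, observe that the shifted tail $S'$ realizes the second stabilization with the minimal substack $m_{c+d}-m_c$, and invoke the product structure of $\mu$ to see that reusing one stack introduces no distortion, before extending to mixed configurations by bilinearity. Your write-up is somewhat more explicit than the paper's (in particular the verification that $m_{c+d}-m_c$ is minimal for $c_S+d$ along $S'$, and the stopping-rule framing of the independence step), but there is no substantive difference in approach.
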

\begin{proof}
Assume first that $C$ and $D$ are pure configurations. Fix an increment stack $S = (\gamma_{v,i})_{v \in V \backslash \{s\} \atop i \geq 1}$. Let $S(k(v))$ and $S(l(v))$ be the minimal stabilizing substacks for $C$ and $C+D$, respectively. Note $S(k(v)) \leq S(l(v))$ --- this is where we use the nonnegativity of $C$ and $D$.

Consider the left-hand side of \eqref{eq:assoc}, namely $(C^\circ + D)^\circ$. \emph{A priori}, we need two increment stacks for each of the two stabilizations: $T_1$ for the inner term $C^\circ$ and $T_2$ for the outer one. Write $S' = (\gamma_{v,i + k(v)})_{v \in V \backslash \{s\} \atop i \geq 1}$. It suffices to prove that
\begin{align*}
&\mu(\{T_1: T_1(k(v)) = S(k(v))\}) \cdot \mu(\{T_2: T_2(l(v)-k(v)) = S'(l(v)-k(v))\}) \\
&= \mu(\{R: R(l(v)) = S(l(v))\}),
\end{align*}
or, in other words, that carrying out both stabilizations on the left-hand side of \eqref{eq:assoc} using only one increment stack $S$ induces no distortion on the measure $\mu$. But this is clear from the definition of $\mu$.

It remains to consider the case in which $C$ and $D$ are not necessarily pure configurations. Write $C = \sum p_i[c_i]$ and $D = \sum q_j[d_j]$. Then
\begin{equation*}
C^\circ \oplus D = \sum p_iq_j(c_i^\circ \oplus d_j) = \sum p_iq_j(c_i \oplus d_j) = C \oplus D,
\end{equation*}
as desired.
\end{proof}

\begin{remark}
If the nonnegativity assumption is not present, Proposition \ref{prop:assoc} fails. In fact, it fails for ASM as well. 
\end{remark}

Thanks to our results so far, we have the following
\begin{theorem} \label{thm:monoid}
The set of all nonnegative stable configurations of a SSP forms a commutative monoid under $\oplus$.
\end{theorem}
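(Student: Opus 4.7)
The plan is to check the four axioms of a commutative monoid---closure, identity, commutativity, and associativity---for the operation $\oplus$ on the set $\mathcal{S}$ of nonnegative stable configurations, assembling the pieces already established earlier in the section.

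First I would address closure. Given $C, D \in \mathcal{S}$, the formal sum $C + D$ is visibly nonnegative, and its stabilization $(C+D)^\circ$ is stable by the very definition of stabilization. Nonnegativity of $(C+D)^\circ$ is the point that needs explicit invocation: it was noted after the definition of $\oplus$ that, because $T \geq \max\{i: i \in \mathrm{supp}\, \wp\}$ (interpreted with the appropriate degree factor when needed), legal toppling of a nonnegative pure configuration leaves it nonnegative, and this property extends to mixed configurations by bilinearity. Hence $C \oplus D \in \mathcal{S}$.

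Next I would exhibit the identity: let $\mathbf{0}$ denote the pure configuration that is zero on every non-sink vertex. Then $\mathbf{0} \in \mathcal{S}$ since $0 < T$, and for any $C \in \mathcal{S}$ we have
\begin{equation*}
C \oplus \mathbf{0} = (C + \mathbf{0})^\circ = C^\circ = C,
\end{equation*}
the last equality because $C$ is already stable (by Proposition \ref{prop:unique_stab}, the minimal stabilizing substack of a stable configuration has $k(v) = 0$ everywhere, so no toppling occurs). Commutativity is immediate from the definition, since $C + D = D + C$ as formal sums and stabilization depends only on this sum. Finally, associativity is exactly the content of Proposition \ref{prop:assoc}: the identity $(C^\circ + D)^\circ = (C + D)^\circ$ applied to pure configurations, then extended linearly to mixed configurations, yields $(F \oplus G) \oplus H = F \oplus (G \oplus H)$ via the rewriting given just before Proposition \ref{prop:assoc}.

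There is essentially no obstacle at this point---every nontrivial ingredient has already been proved. The only thing one must be careful about is that Proposition \ref{prop:assoc} requires nonnegativity of the operands (the remark after its proof warns that associativity fails without it), and our restriction to $\mathcal{S}$ is precisely what guarantees this hypothesis is always met when $\oplus$ is iterated. So the proof is really a one-line citation of Proposition \ref{prop:assoc} together with the observations above.
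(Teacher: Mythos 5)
Your proposal is correct and follows exactly the route the paper intends: the paper gives no separate proof of Theorem \ref{thm:monoid} beyond the phrase ``thanks to our results so far,'' meaning precisely the assembly you carry out (closure from the nonnegativity remark after the definition of $\oplus$, the zero configuration as identity, commutativity by definition, and associativity from Proposition \ref{prop:assoc}). Your parenthetical caution about the degree factor in the closure step is a reasonable extra care, but the argument is the same as the paper's.
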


\subsection{The existence of unique steady state}

Denote the monoid in Theorem \ref{thm:monoid} by $M$. A \emph{steady state} $S \in M$ is an element such that
\begin{equation*}
S \oplus C = S \mbox{ for all $C \in M$.}
\end{equation*}

It is clear from this definition that a steady state, if exists, is unique.

\begin{theorem} \label{thm:steady}
Assume the greatest common divisor of the elements of $\mathrm{supp}\, \wp$ equals $1$. Then $M$ has a steady state.
\end{theorem}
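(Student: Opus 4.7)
My plan is to cast the ``add-one-grain-at-$v$'' operators as a commuting family of continuous affine self-maps on a compact convex set, apply a common fixed point theorem to obtain a common fixed point, and then upgrade that common fixed point to a genuine absorbing element of $\oplus$.

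First, the geometric setup. Since a nonnegative stable pure configuration has heights in $\{0,\ldots,T-1\}$, the set $M^{\mathrm{pure}}$ of such configurations is finite, and $M$ identifies with the probability simplex $\Delta(M^{\mathrm{pure}}) \subset \mathbb{R}^{|M^{\mathrm{pure}}|}$, a compact convex subset of a finite-dimensional vector space. For each non-sink vertex $v$, let $T_v : M \to M$ be $T_v(C) := C \oplus [e_v]$, where $e_v$ is the pure configuration with one grain at $v$ and zero elsewhere. The bilinear extension of $\oplus$ to mixed configurations built into \S2.1 makes each $T_v$ continuous and affine, and by Theorem \ref{thm:monoid} (commutativity and associativity of $\oplus$) the family $\{T_v\}_{v \in V \setminus \{s\}}$ is pairwise commuting.

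Next, I would invoke the Markov--Kakutani common fixed point theorem to obtain $S \in M$ with $T_v(S) = S$ for every non-sink $v$. If one prefers to avoid Markov--Kakutani, proceed inductively: order the non-sink vertices $v_1,\ldots,v_n$, let $F_0 := M$, and let $F_i \subseteq F_{i-1}$ be the fixed-point set of $T_{v_i}$ restricted to $F_{i-1}$, which is nonempty by Brouwer applied to a continuous affine self-map of a compact convex set; by commutativity each $F_i$ is $T_{v_j}$-invariant for every $j$, so the induction reaches a nonempty $F_n$. To upgrade $S \oplus [e_v] = S$ to a steady state, note that any pure stable nonnegative $c$ decomposes as $[c] = \bigoplus_v [e_v]^{\oplus c(v)}$ (no topplings occur in forming this, since $c$ is already stable); iterating gives $S \oplus [c] = S$, and bilinearity in the second argument promotes this to $S \oplus \sum p_i[c_i] = S$ for every mixed $C$. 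Uniqueness is automatic: two steady states $S_1, S_2$ must satisfy $S_1 = S_1 \oplus S_2 = S_2 \oplus S_1 = S_2$.

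One remark is in order: the hypothesis $\gcd(\mathrm{supp}\,\wp) = 1$ does not appear anywhere in the above, so I suspect the authors have in mind a more constructive route --- for instance, taking $S$ to be the stationary distribution of the Markov chain that adds a grain at a uniformly random non-sink vertex and then stabilizes, with the gcd hypothesis providing the aperiodicity needed to make that stationary distribution unique, after which the commuting $T_v$'s force the unique stationary to be fixed by each $T_v$ individually. In any case the conceptual core is the same: produce a common fixed point of the commuting affine family $\{T_v\}$. The one technical point I expect to require care is verifying that each $T_v$ is genuinely continuous and affine on \emph{all} of $M$, mixed configurations included; this is essentially encoded in the bilinear definition of $\oplus$ on mixed inputs, so I do not expect a serious obstacle.
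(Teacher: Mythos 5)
Your proof is correct, but it takes a genuinely different route from the paper's. You obtain the steady state softly, as a Markov--Kakutani (or iterated Brouwer) common fixed point of the commuting affine maps $C \mapsto C \oplus [e_v]$ on the finite-dimensional simplex of nonnegative stable configurations, and then upgrade a common fixed point to an absorbing element by decomposing any stable pure $c$ into single grains --- a step that also appears implicitly in the paper, which likewise reduces to checking $(S(R)+v_i)^\circ = S(R)^\circ$ for each $i$. The paper instead \emph{constructs} the steady state explicitly: it is $S(R)^\circ$, where $S(c)$ is a product-form (``parallelepiped'') distribution over ASM-topplings of $c$ with marginal weights $r(k)=C\sum_{i>k}\wp(i)$, and $R$ is the set of recurrent configurations of the auxiliary deterministic ASM; the key input is the Perron--Frobenius-type Lemma \ref{lemma:steady} for a companion-like matrix built from $\wp$, together with the classical fact that $R+v_i$ meets each ASM equivalence class exactly once. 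Your observation that the gcd hypothesis is never used is accurate and consistent with this: in the paper that hypothesis feeds into the simplicity and dominance of the eigenvalue $1$ in Lemma \ref{lemma:steady}, which is what drives the explicit description and the convergence statement of Proposition \ref{prop:conv_steady}, rather than bare existence. The trade-off is that your argument, while shorter and requiring fewer hypotheses, yields no information about what the steady state looks like, whereas the paper's construction is reused downstream (Propositions \ref{prop:conv_steady} and \ref{prop:rhf}); also note a purely notational clash, since the paper already uses $T_v$ for the toppling operator rather than for grain addition.
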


There is no loss of generality incurred by the assumption in the theorem, since if we divide all elements of $\mathrm{supp}\, \wp$ by the g.c.d. then we obtain essentially the same model.

Our proof of Theorem \ref{thm:steady} relies almost entirely on the following lemma.

\begin{lemma} \label{lemma:steady}
Let $p: \{1, 2, \ldots, n\} \rightarrow \mathbb{R}_{\geq 0}$ such that $\sum p(i) = 1$, and such that the g.c.d. of the elements of $\mathrm{supp}\, p$ equals $1$. Consider the $n \times n$ matrix
\begin{equation*}
A =
\begin{bmatrix}
p(1) & 1 & 0 & 0 & \dots  & 0 \\
p(2) & 0 & 1 & 0 & \dots  & 0 \\
p(3) & 0 & 0 & 1 & \dots  & 0 \\
\vdots & \vdots & \vdots & \vdots & \ddots & \vdots \\
p(n) & 0 & 0 & \dots & \dots  & 0
\end{bmatrix}
\end{equation*}
with the first column filled with $p(i)$, the superdiagonal with $1$, and $0$ elsewhere. Then its (complex right) eigenvalue with the largest modulus is 1, which has multiplicity 1, with the eigenvector
\begin{equation} \label{eq:ev}
\begin{bmatrix}
r(1) \\
r(2) \\
\vdots \\
r(n)
\end{bmatrix}
\end{equation}
where $r(k) = \sum_{i \geq k} p(i)$.
\end{lemma}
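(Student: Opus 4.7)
My plan is to compute the characteristic polynomial of $A$ explicitly, then use the gcd hypothesis to locate all roots of maximal modulus. First I would write the eigenvalue equation $A\mathbf{v} = \lambda\mathbf{v}$ componentwise; the structure of $A$ immediately gives the recursion $v_{k+1} = \lambda v_k - p(k)v_1$ for $1 \leq k \leq n-1$, together with the closing equation $p(n)v_1 = \lambda v_n$. Since $v_1 = 0$ propagates to $\mathbf{v} = 0$, I may normalize $v_1 = 1$; a short induction then yields $v_{k+1} = \lambda^k - \sum_{j=1}^k p(j)\lambda^{k-j}$, and substituting into the last equation gives the characteristic polynomial
\[
q(\lambda) = \lambda^n - \sum_{j=1}^n p(j)\lambda^{n-j},
\]
so that the nonzero eigenvalues are precisely the $\lambda$ satisfying $1 = \sum_{j=1}^n p(j)\lambda^{-j}$.

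Next I would rule out any eigenvalue of modulus $\geq 1$ other than $\lambda = 1$. For $|\lambda| > 1$, the triangle inequality gives $|\sum p(j)\lambda^{-j}| \leq \sum p(j)|\lambda|^{-j} < \sum p(j) = 1$, a contradiction. For $|\lambda| = 1$, equality in the triangle inequality forces the nonzero terms $p(j)\lambda^{-j}$ (i.e.\ for $j \in \mathrm{supp}\, p$) to be positive real multiples of a common unit complex number $\zeta$; since their sum $\zeta \sum p(j) = \zeta$ must equal $1$, we get $\zeta = 1$ and hence $\lambda^{j} = 1$ for every $j \in \mathrm{supp}\, p$. This means $\lambda$ is a root of unity whose order divides each such $j$, hence divides $\gcd(\mathrm{supp}\, p) = 1$, so $\lambda = 1$.

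Finally, I would verify that $\mathbf{r}$ is the corresponding eigenvector and that $\lambda = 1$ is simple. Using $r(1) = \sum p(j) = 1$, the $k$-th row for $k < n$ gives $p(k)r(1) + r(k+1) = p(k) + r(k+1) = r(k)$, and the last row gives $p(n)r(1) = p(n) = r(n)$. For simplicity, I compute $q'(1) = n - \sum_j p(j)(n-j) = \sum_{j=1}^n j\,p(j) \geq 1 > 0$, so $\lambda = 1$ is a simple root.

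I expect the only non-routine step to be the equality-case analysis on $|\lambda| = 1$, which is where the gcd hypothesis genuinely enters; the characteristic-polynomial derivation, the eigenvector check, and the derivative computation at $\lambda = 1$ are all mechanical.
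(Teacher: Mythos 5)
Your proof is correct and follows essentially the same route as the paper: identify the characteristic polynomial $\lambda^n - \sum_j p(j)\lambda^{n-j}$, exclude $|\lambda|>1$ by the triangle inequality, use the equality case on $|\lambda|=1$ together with the gcd hypothesis to force $\lambda=1$, verify the eigenvector row by row, and get simplicity from the nonvanishing derivative at $1$. Your write-up is in fact slightly more explicit than the paper's at the equality-case step, but there is no substantive difference.
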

\begin{proof}
The characteristic equation of $A$ equals
\begin{equation*}
\chi_A(\lambda) = \lambda^n - p(1)\lambda^{n-1} - \ldots - p(n-1)\lambda - p(n).
\end{equation*}

We first show that $1$ is an eigenvalue of multiplicity one. Certainly $\chi_A(1) = 0$. Also since
\begin{equation*}
\chi'_A(\lambda) = n\lambda^{n-1} - (n-1)p(1)\lambda^{n-2} - \ldots - p(n-1)
\end{equation*}
and $n > (n-1)p(1) + \ldots p(n-1)$, $\chi'_A(1) \neq 0$. It is straightforward to check that \eqref{eq:ev} is a solution to $Ax = x$. This proves the latter two claims of the lemma.

If $|\lambda| > 1$, then $|\lambda|^n > \sum p(i)|\lambda|^{n-i}$, so it cannot be a root of $\chi_A$. Therefore it suffices to show that $1$ is the only complex eigenvalue with modulus $1$. Suppose $|\lambda| = 1$ and $\chi_A(\lambda)$ = 0. Then for all $i \in \mathrm{supp}\, p$, $\lambda^{n-i} = \lambda^n$ must hold. In other words, $\lambda$ is an $i$-th root of unity for all $i \in \mathrm{supp}\, p$. But since the g.c.d. of the elements of $\mathrm{supp}\, p$ equals 1 by assumption, $\lambda = 1$ is forced.
\end{proof}

An immediate consequence of Lemma \ref{lemma:steady} is that, for any $x \in \mathbb{R}^n$, $A^kx$ approaches a constant multiple of \eqref{eq:ev} as $k \rightarrow \infty$. This is how we use Lemma \ref{lemma:steady} in the argument below.

\begin{proof}[Proof of Theorem \ref{thm:steady}]
Write $V \backslash \{s\} = \{v_1, v_2, \ldots, v_L\}$, and let $n$ be the smallest positive integer such that $\mathrm{supp}\, \wp \subseteq \{1, \ldots, n\}$. Define $\tau_{i}$ to be the toppling operator at $v_i$ with increment $1$ (hence $\tau_i$ coincides with the toppling operator of the standard ASM on $G$), so that $T_{v_i} = \sum_j \wp(j)\tau_i^j,$ where $T_{v_i}$ is the toppling operator of SSP.

Define $r: \{0, \ldots, n-1\} \rightarrow \mathbb{R}$ by $r(k) = C\sum_{i > k}^{n} \wp(i)$, where the constant factor $C$ is set so that $\sum r(k) = 1$, that is,
\begin{equation*}
C = \left(\sum_{i=1}^n \sum_{j=i}^n \wp(j)\right)^{-1}.
\end{equation*}

For any pure configuration $c$, define
\begin{equation*}
S(c) := \sum_{k_1 = 0}^{n-1} \cdots \sum_{k_L = 0}^{n-1} \left(\prod_{j=1}^L r(k_j)\tau_j^{k_j}\right)[c].
\end{equation*}

One can think of $S(c)$ as a distribution supported on $\prod_{i=1}^L k_i$ points on the space of (pure) configurations, which turn out to form the shape of a parallelepiped; see Figure \ref{fig:pic} for an illustration in case $L = 2, \mathrm{supp}\, \wp = \{1, 2, 3, 4, 5\}$. The point on the upper-right corner corresponds to $[c]$, and other points are obtained by toppling $[c]$ in various directions; see again Figure \ref{fig:pic} for an illustration.

Write $I(\cdot)$ for the indicator function of the statement inside the parenthesis, i.e. if it is satisfied, $I(\cdot) = 1$, otherwise zero. From Lemma \ref{lemma:steady}, it follows that
\begin{align}
&\sum_{k_1 = 0}^{n-1} \cdots \sum_{k_L = 0}^{n-1} \left(\prod_{j=1}^L r(k_j)\tau_j^{k_j}\right)T_{v_i}^{I(k_i = 0)}[c] \notag \\
&= \sum_{k_1 = 0}^{n-1} \cdots \sum_{k_L = 0}^{n-1} \left(\prod_{j=1}^L r(k_j)\tau_j^{k_j}\right)\tau_i[c] \notag \\
&= S(\tau_ic). \label{eq:commute}
\end{align}

This means that, if we ``push'' the parallelepiped-shaped distribution in the direction of $\tau_{i}$ by applying $T_{v_i}$ to the face of the parallelepiped corresponding to that direction, we obtain the same distribution on the parallepiped, except that the upper-right corner of the parallelepiped moves from $[c]$ to $[\tau_ic]$.

\begin{figure}
\includegraphics[scale=0.5]{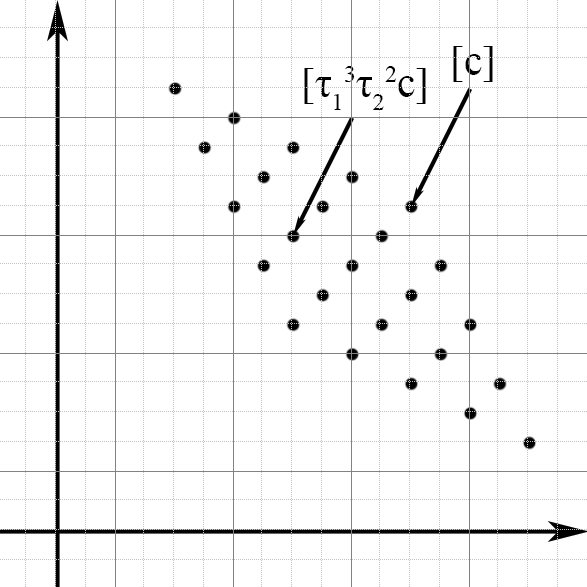}
\caption{The ``parallelepiped.''}
\label{fig:pic}
\end{figure}

We choose $R$ to be the set of all recurrent configurations of ASM on $G$, whose threshold on each vertex is set uniformly to $T$, rather than the conventional $\deg v$. For convenience, let us temporarily write
\begin{equation*}
S(R) := \frac{1}{|R|}\sum_{c \in R} S(c).
\end{equation*}

We will show that $S(R)^\circ$ is the steady state. 

By Proposition \ref{prop:assoc}, it suffices to show that $(S(R) + v_i)^\circ = S(R)^\circ$ for all $i = 1, \ldots, L$. By the theory of ASM, $R + v_i = \{c + v_i: c \in R\}$ contains exactly one representative of each of the equivalence classes of configurations i.e. for each $c' \in R + v_i$, there exists exactly one $c \in R$ such that $c'$ stabilizes to $c$ in ASM. By \eqref{eq:commute}, $S(c')^\circ = S(c)^\circ$, and therefore, by Proposition \ref{prop:assoc} again,
\begin{equation*}
(S(R) + v_i)^\circ =\left(\frac{1}{|R|}\sum_{c' \in R+v_i} S(c')\right)^\circ = \left(\frac{1}{|R|}\sum_{c' \in R+v_i} S(c')^\circ\right)^\circ = \left(\frac{1}{|R|}\sum_{c \in R} S(c)^\circ\right)^\circ = S(R)^\circ,
\end{equation*}
as desired.
\end{proof}

The point of the following proposition is that, in a SSP, if we start from any configuration, and repeat sufficiently many times the Markov process of adding a sand grain at random --- e.g. add $\frac{1}{L}\sum v_i$ --- and re-stabilizing, then we obtain a configuration that is arbitrarily close to the steady state. In the context of LLL, in place of the Markov chain one takes a large input, which is why the proposition below is stated the way it is.

\begin{proposition} \label{prop:conv_steady}
We continue with the notations of Theorem \ref{thm:steady}. In addition, below we interpret a pure configuration as an $L$-dimensional vector with integer entries, and a mixed configuration as a finitely supported function on the set of pure configurations. Then we impose the typical Euclidean metric on the space of pure or mixed configurations accordingly.

For any $\varepsilon > 0$, there exists $D > 0$ such that the stabilization $c^\circ$ of any non-negative pure configuration $c$ whose distance from origin is greater than $D$ is within an $\varepsilon$ distance of $S(c')^\circ$ with respect to the uniform norm, where $c'$ is the pure configuration that is recurrent and is equivalent to $c$ under the ASM on $G$.
\end{proposition}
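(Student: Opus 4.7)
The plan is to interpret the SSP stabilization of a large $c$ as a multi-site renewal process and invoke Lemma \ref{lemma:steady} together with the equivariance identity \eqref{eq:commute} from the proof of Theorem \ref{thm:steady}.

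For any fixed increment stack $S$, Proposition \ref{prop:unique_stab} gives
\begin{equation*}
c^\circ = \prod_{w} \tau_w^{K_w(S)} c, \qquad K_w(S) := \sum_{j=1}^{m_w(S)} \gamma_{w,j},
\end{equation*}
where $m_w(S)$ is the number of topplings at $w$ in the minimal stabilizing substack. Since $c$ and $c'$ lie in the same ASM equivalence class, there is an integer vector $(n_w)_w$ depending on $c$ (and not on $S$) such that $c' = \prod_w \tau_w^{n_w} c$, so $c^\circ = \prod_w \tau_w^{k_w(S)} c'$ with $k_w(S) := K_w(S) - n_w$. The task then reduces to showing that the joint distribution of $(k_w(S))_w$ converges, as $\|c\|\to\infty$, to the product measure $\prod_w r(k_w)$ on $\{0,1,\ldots,n-1\}^L$, where $r$ is the eigenvector from Lemma \ref{lemma:steady}; granting this, $c^\circ$ coincides in the limit with $S(c')$, and hence with $S(c')^\circ$ at the remaining (if any) unstable parallelepiped corners.

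At each single vertex $w$, $K_w(S)$ is a renewal sum with $\wp$-distributed increments, stopped when the cumulative effect first reduces $v_w$ below the threshold. The spectral content of Lemma \ref{lemma:steady} identifies $r$ as the unique stationary overshoot distribution of such a renewal, and in particular, once $\|c\|$ is large enough that the ASM-stabilization of $c$ reaches $c'$, minimality of the stabilizing substack plus the bound $\gamma \leq n$ force $k_w(S) \in \{0,\ldots,n-1\}$. To upgrade the single-site convergence to the product joint distribution, I would exploit \eqref{eq:commute}: it exhibits a legal toppling $T_{v_i}$ on the parallelepiped $S(c')$ as a one-coordinate push in the $v_i$ direction, inducing a Markov chain on $\{0,\ldots,n-1\}^L$ whose unique stationary measure is exactly $\prod_w r(k_w)$ (with product marginals built in by the one-coordinate form of Lemma \ref{lemma:steady}). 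A coupling of SSP-stabilization of $c$ with this parallelepiped chain then transfers the product limit to $(k_w(S))_w$.

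The principal obstacle will be justifying the joint convergence despite the coupling of adjacent topplings: neighboring vertices interact through shared edges, and a priori their residuals are correlated. The cleanest route is through the explicit product-form invariance furnished by \eqref{eq:commute}, combined with a mixing / spectral gap argument for the finite Markov chain on $\{0,\ldots,n-1\}^L$; finiteness of the state space makes uniform exponential decay to stationarity automatic once ergodicity is in place. A secondary, more bookkeeping-style issue is to guarantee that ASM-stabilization of $c$ lands at the recurrent $c'$ when $\|c\|>D$: for the cycle graph $A_L$ of the main text this is classical, and in general one can sidestep it by iterating \eqref{eq:commute} to conclude $S(c^{\circ,\mathrm{ASM}})^\circ = S(c')^\circ$. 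Combining the renewal estimate with this product invariance delivers, by a standard $\varepsilon$--$\delta$ argument, the required $D$ with $\|c^\circ - S(c')^\circ\|_\infty < \varepsilon$ for all nonnegative pure $c$ of norm greater than $D$.
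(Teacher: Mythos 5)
Your proposal follows essentially the same route as the paper: both reduce the claim to showing that the joint distribution of the per-vertex residuals on the parallelepiped converges to $\prod_w r(k_w)$, with Lemma \ref{lemma:steady} supplying the convergence of each marginal and the pushing/equivariance identity \eqref{eq:commute} supplying the mechanism. The one obstacle you flag --- joint convergence despite correlations between neighboring sites --- is dispatched in the paper without any renewal, coupling, or mixing argument: the starting distribution $\prod_j \wp(k_j)$ obtained from $\bigl(\prod_i T_{v_i}\bigr)c$ is already a product measure on the parallelepiped, and each subsequent push acts on a single coordinate by the matrix $A$ of Lemma \ref{lemma:steady}, so the joint law remains an exact product throughout and each factor converges to $r$ separately.
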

\begin{proof}
If $c$ is far enough from the origin, then it must be toppled at each vertex arbitrarily many times in order for it to become stabilized. So without loss of generality, assume that $c$ can be legally toppled at every vertex. Write $V \backslash \{s\} = \{v_1, v_2, \ldots, v_L\}$ as earlier, and consider the configuration
\begin{equation} \label{eq:oinkoink}
\left( \prod_{i=1}^L T_{v_i} \right) c = \sum_{k_1=0}^{n-1} \cdots \sum_{k_L=0} \left(\prod_{j=1}^L \wp(k_j)\tau_j^{k_j} \right) [c].
\end{equation}

\eqref{eq:oinkoink}, like $S(c)$, may be thought of as a distribution on a parallelepiped illustrated in Figure \ref{fig:pic}. By Lemma \ref{lemma:steady}, if we ``push'' the parallelepiped --- in the manner similar to \eqref{eq:commute} --- from every direction sufficiently many times, the weight on each and every point will be arbitrarily close to $\prod_{j=1}^L r(k_j)$. Hence, if $c$ is sufficiently far enough from the origin, the stabilization of \eqref{eq:oinkoink} will be arbitrarily close to $S(c')$. This completes the proof.
\end{proof}

The steady state alone forms the minimal ideal of the commutative monoid $M$ of the nonnegative stable configurations of SSP. One may think this is disappointing, since its counterpart in ASM has a very rich theory. However, the theory of the abelian sandpile group can be extended to SSP in a straightforward way, if phrased in terms of the cosets of the ``translations'' $< \tau_i  : i = 1, \ldots, L>$, since SSP respects this structure. Indeed, in our proof of Theorem \ref{thm:steady}, it can be seen that the underlying ASM theory plays a role. Our SSP theory is building up on top of the ASM theory, rather than replacing it.


\subsection{The shape of the steady state on $A_L$}

Ideally, we would like to prove the various quantitative properties of the steady state that we observe in Figure \ref{fig:lllssp}: the middle values are almost identical to one another, but near the boundaries there is a sharp drop in pile height, et cetera. This can be interpreted as the following combinatorial problem. Consider the parallelepiped in Figure \ref{fig:pic}, and suppose $[c]$ is a stable configuration; it may help to think of $[c]$ as the maximal stable configuration, i.e. has height $T-1$ on all sites. Then some configurations on the parallelepiped are stable e.g. $[\tau_1^3\tau_2^2c]$, but others may not be. Because $[c]$ is stable, we cannot legally push the entire face of the parallelepiped as we have done in the proofs of Theorem \ref{thm:steady} and Proposition \ref{prop:conv_steady}. Instead, we must ``fold'' the parts of the parallelepiped that are outside the set of stable configurations. Understanding the steady state then amounts to understanding the resulting distribution on the ``folded'' parallelepiped --- they are identical. In some sense, this parallelepiped-folding replaces the role of the burning algorithm (see \cite{Dhar90}, also \cite{CMS13}) in the study of the steady state.

The difficulty lies in describing this folded distribution in a mathematically concise manner. The best we can prove at the moment is that there exists a gap, whose size is independent of the system size $L$, between the average-case and the worst-case RHF \eqref{eq:rhf} of SSP on $A_L$. This is in analogy with the folklore observation that there exists a gap between the average-case and the worst-case RHF of the LLL algorithm.

\begin{proposition} \label{prop:rhf}
Let $C$ be as in the proof of Theorem \ref{thm:steady}. Then for any $\varepsilon > 0$ and $L$ sufficiently large, the steady state of SSP on $A_L$ has the average RHF bounded from above by $T/2 - C^{-1}/e^2 + \varepsilon$.
\end{proposition}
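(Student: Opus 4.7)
My plan is to start from the explicit description of the steady state $\pi = S(R)^\circ$ established in Theorem~\ref{thm:steady}, together with the parallelepiped-folding viewpoint of Section~2.5. The trivial bound $r_i \leq T-1$ already yields $\mathbb{E}_\pi[\log(\mathrm{RHF})] \leq (T-1)L/(2(L+1))$, so the task reduces to exhibiting an additional weighted-average drop of size roughly $1/2 - C^{-1}/e^2$ in the limit $L \to \infty$, or equivalently showing that on a non-negligible set of sites (when weighted by $(L+1-i)$) the expected height is bounded strictly below $T-1$.

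The first step is to compute the expected configuration under the unfolded distribution $S(R)$. Since $\tau_j$ acts by $-2$ at $v_j$ and $+1$ at each non-sink neighbor, and the coordinates $k_1,\ldots,k_L$ are i.i.d.\ with law $r$ and mean $\bar k := \sum_k k\, r(k)$, the linear effect on the average height at $v_i$ is $-2\bar k + \bar k \cdot \#\{\text{non-sink neighbors of } v_i\}$. This evaluates to $0$ at interior sites and to $-\bar k$ at the two boundary sites $v_1$ and $v_L$. So $\mathbb{E}_{S(R)}[r_i] = \mathbb{E}_R[c(v_i)]$ in the bulk and $\mathbb{E}_R[c(v_i)] - \bar k$ at $v_1, v_L$. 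Taken alone, this two-site baseline drop contributes only a weighted saving of order $\bar k / L$, which is $o(1)$, so it cannot account for the full $C^{-1}/e^2$ --- the real gain must come from the stabilization $S(R) \mapsto S(R)^\circ$.

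The second step is therefore to control the folding itself. Each parallelepiped vertex $\tau_1^{k_1}\cdots\tau_L^{k_L}c$ whose unfolded height $c(v_i) - 2k_i + k_{i-1} + k_{i+1}$ exceeds $T$ somewhere must be stabilized, and stabilization only loses grains through the sink (via cascades touching $v_1$ or $v_L$). I would show that a positive fraction of the mass of $S(R)$ has excess above $T$ near the boundary, driving a cascade whose reach into the bulk I estimate by tracking how the parallelepiped ``folds back'' along the axes emanating from the boundary. This gives expected-height drops $\delta_i$ at sites $v_1,\ldots,v_m$ for some $m$ growing with $L$, and an optimization of the thickness $m$ against the per-site drop yields the cumulative weighted saving $\sum_i (L+1-i)\delta_i/(L+1)^2 \geq C^{-1}/e^2 - \varepsilon$. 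The factor $1/e^2$ should emerge as the extremum of a functional such as $x \mapsto x^2 e^{-2x}$ (whose maximum is $e^{-2}$) describing the exponentially decaying boundary profile.

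The hard part --- and the reason I expect the constant $1/e^2$ is not tight --- is precisely the quantitative control of the folded distribution. Section~2.5 explicitly flags this: the authors describe the folded parallelepiped geometrically but acknowledge they lack a clean closed-form for it. Thus the proof almost certainly relies on a careful but intentionally lossy estimate (``the best we can prove at the moment''), in which one trades sharpness for tractability when bounding how the cascades near the sink propagate. A tight analysis would require precise avalanche-size estimates of the very kind that have resisted analysis in LLL, which is after all the paper's broader motivation.
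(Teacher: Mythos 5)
Your proposal does not close the argument, and the mechanism you guess at is not the one the paper uses. The critical content of your plan is entirely in your ``second step'' --- showing that the folding/stabilization produces expected-height drops $\delta_i$ over a boundary layer of optimized thickness $m$ --- and you give no actual argument for it; you explicitly concede that the quantitative control of the folded distribution is the hard part and leave it open. Since Section~2.5 of the paper states that precisely this folded distribution is what the authors cannot describe, a proof route that requires estimating it is exactly the route the proposition is designed to avoid. Your first step (the $-\bar k$ baseline drop at the two boundary sites, contributing only $o(1)$) is a correct observation but, as you note yourself, irrelevant to the claimed constant.

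The paper's proof sidesteps the steady state's shape entirely and is a short counting argument. Writing $M=(T-1)L/2(L+1)$ for the maximal $\log(\mathrm{RHF})$ and $s_i = T - c(v_i)$, the stable configurations with $\log(\mathrm{RHF}) > M-a$ are exactly the lattice points of the simplex $\bigl\{ s_i>0:\ \sum_i (L+1-i)s_i < a(L+1)^2 \bigr\}$, whose volume is $a^L(L+1)^{2L}/(L!)^2$; by Stirling this is below $a^L e^{2L}$ for large $L$ --- this, not an extremum of $x^2e^{-2x}$, is where $e^{2}$ comes from. Since each weight $r(k)$ in the parallelepiped distribution is at most $C$, any single pure configuration carries steady-state mass at most $C^L$, so the total mass on the near-maximal region is at most $(aCe^{2})^{L}$, which vanishes whenever $a < C^{-1}/e^{2}$. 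The bound on the average RHF follows immediately. No avalanche estimates, boundary profiles, or description of the folded parallelepiped are needed; the only input from the steady-state construction is the uniform upper bound $C^L$ on the mass of a pure configuration.
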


\begin{remark}
Note that the maximum $\log(\mathrm{RHF})$ equals $(T-1)L/2(L+1) \approx T/2$. For the SSP in the introduction i.e. $\wp =$ (uniform distribution on $\{1, \ldots, 2I-1\}$), $C^{-1} \approx I$ so the proposition yields the bound $T/2 - I/e^2$, whereas experimentally we have $\approx T/2 - I/4$. 
\end{remark}

\begin{proof}[Proof of Proposition \ref{prop:rhf}]

As in the remark above, the greatest possible $\log(\mathrm{RHF})$ equals $M := (T-1)L/2(L+1)$. For $a > 0$, we will estimate the number of pure stable configurations whose RHF is greater than $M - a$. If $c$ satisfies such a condition, then writing $s_i = T - c(i)$, from \eqref{eq:rhf} it follows that
\begin{equation} \label{ribbit}
a > \frac{1}{(L+1)^2}\sum_{i=1}^L (L+1-i)s_i.
\end{equation}

Moreover, this is an if-and-only-if condition. Hence it comes down to measuring the volume of the set of vectors $(s_1, \ldots, s_L)$, $s_i > 0$, such that \eqref{ribbit} holds. This set forms a simplex, with one vertex at origin, and $L$ other vertices given by
\begin{equation*}
\left(0, \ldots, \frac{a(L+1)^2}{L+1-i}, \ldots, 0\right)
\end{equation*}
for each $i= 1, \ldots, L$, whose only non-zero entry is the $i$-th entry. The volume of this simplex equals
\begin{equation*}
\frac{a^L(L+1)^{2L}}{(L!)^2},
\end{equation*}
which, by Stirling's formula, is strictly bounded by $a^Le^{2L}$ for all sufficiently large $L$. Furthermore, the maximum possible probability mass of the steady state is given by $C^{L}$. Hence, for $L$ large, the portion of the steady state lying on the set $(\mathrm{RHF}) > M - a$ is arbitrary small if
\begin{equation*}
a < C^{-1}/e^2.
\end{equation*}

This completes the proof.
\end{proof}

\section{Behavior of SSP on one-dimensional lattices}

In this section, we present a pilot numerical study on the behavior of SSP pertaining to quantitative properties of its steady state, whose existence is established in the previous section. We restrict our attention to SSP on one-dimensional grids $A_L$ for $L = 400, 4000, 40000$, with $T = 400$ and $\wp$ the uniform distribution on $\{1, \ldots, 100\}$, adding $50$ grains to a random site for each step in the associated Markov chain. We hope to be able to carry out more extensive experiments at a later time.


\subsection{Avalanche size distribution}

\begin{figure}
\includegraphics[scale = 0.5]{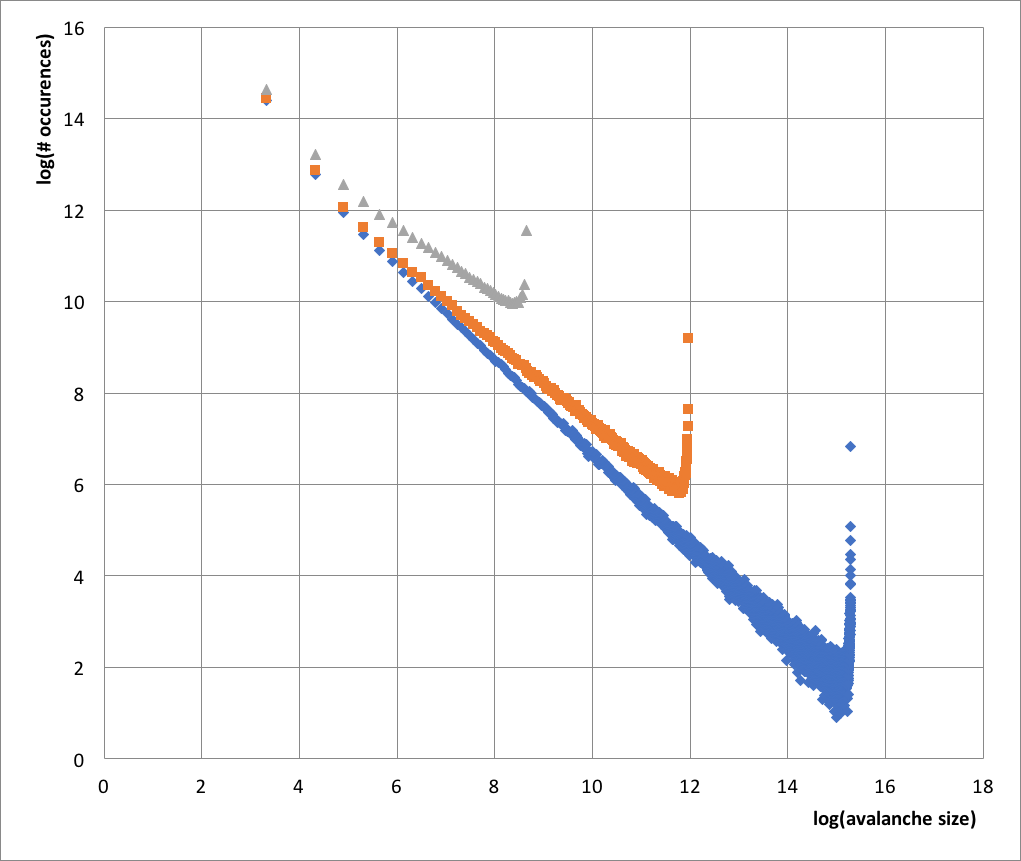}
\caption{The avalanche size distribution of SSP. Gray, orange, blue graphs represent cases $L = 400, 4000, 40000$ respectively.}
\label{fig:socssp} 
\end{figure}


Figure \ref{fig:socssp} presents the log-log scale graph of the avalanche size --- i.e. the number of sites on which at least one toppling occurred during a stabilization --- distribution of SSP. For each system size $L =400, 4000, 40000$, we made one million trials. The $x$-axis represents $\log_2$ of avalanche sizes, and the $y$-axis represents $\log_2$ of the number of occurences.

Figure \ref{fig:socssp} clearly suggests a power law, plus a delta distribution near $\log_2 L$. The data points for $L = 40000$, excluding those at the tail, form a line of slope $\approx -0.98$. Of course, more experiments are needed to precisely determine the exponent of SSP.

It may be amusing to recall the behavior of one-dimensional ASM, which is well-known to behave in the totally opposite way, i.e. (avalanche size) $\propto$ (frequency). On the other hand, several stochastic models in dimension 1 are known to demonstrate a power law (\cite{GDM16}, \cite{HPC11}, \cite{SD09}).


\subsection{Average heights of the piles in the steady state}

\begin{figure}
\includegraphics[scale = 0.35]{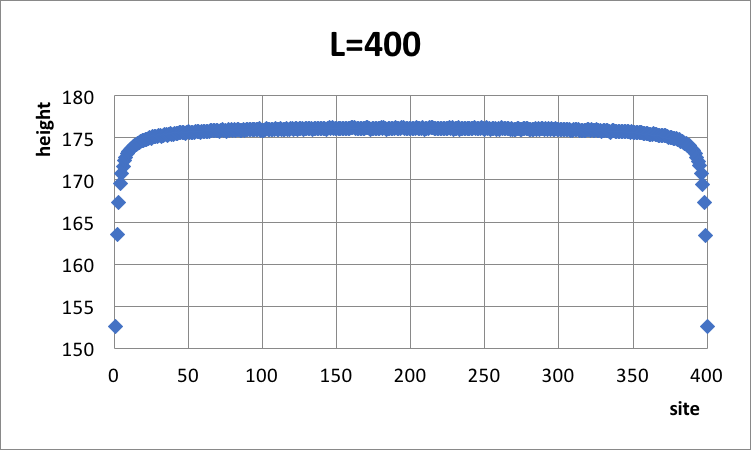}
\includegraphics[scale = 0.35]{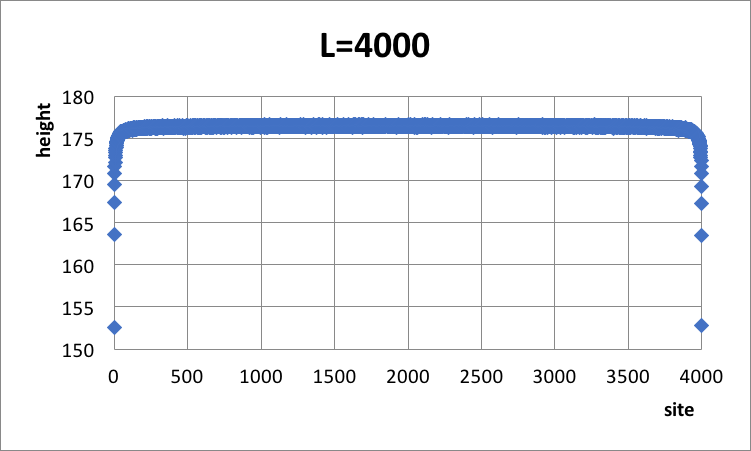}
\includegraphics[scale = 0.35]{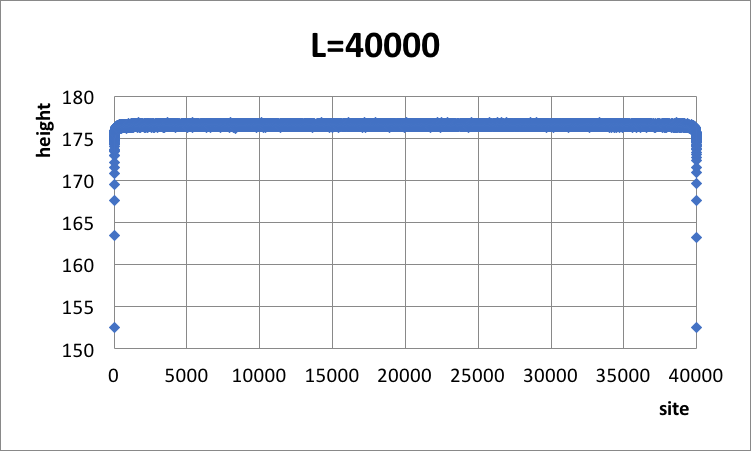}
\caption{The average steady state density of SSP.}
\label{fig:shapessp} 
\end{figure}

\begin{figure}
\includegraphics[scale = 0.4]{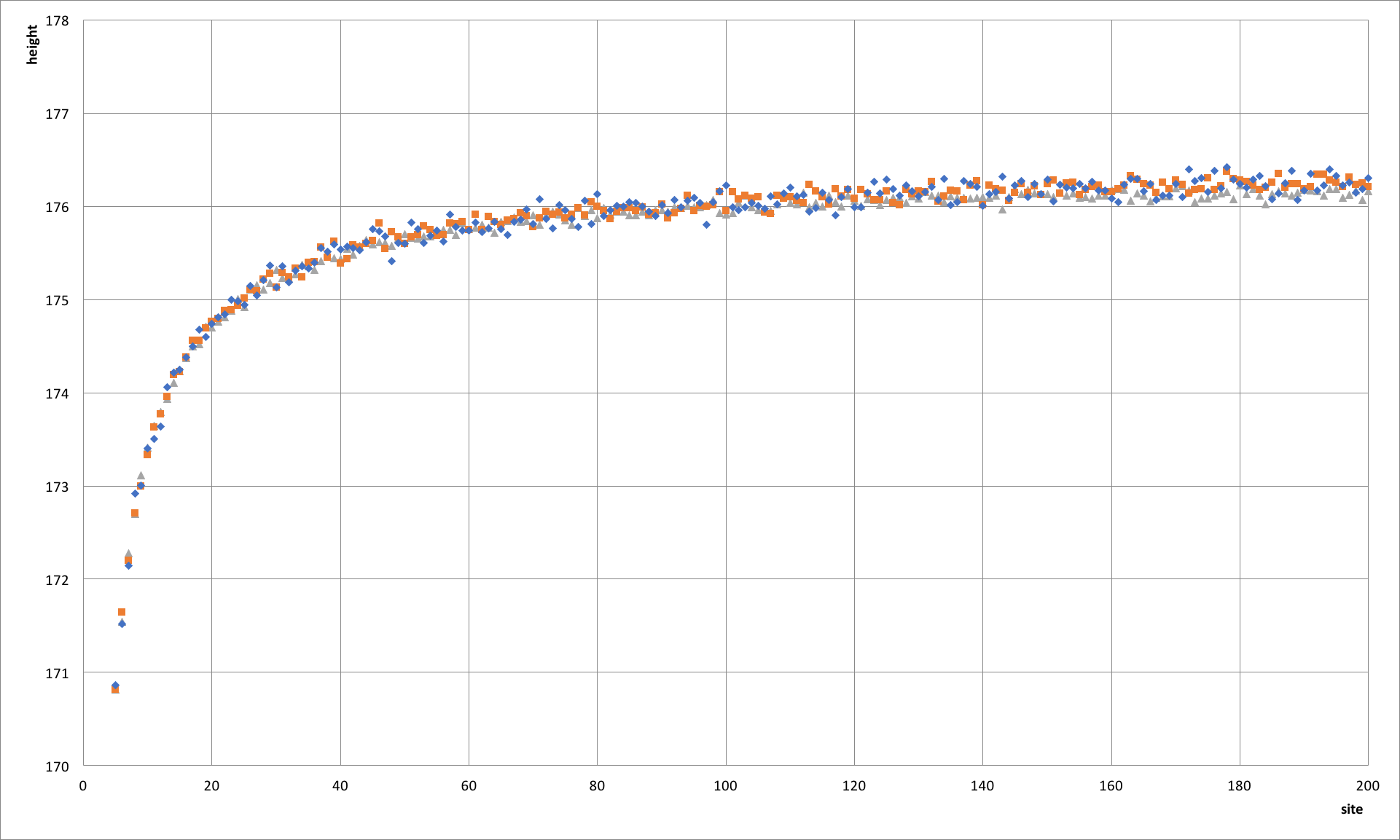}
\caption{The average steady state density of SSP for different system sizes. Gray, orange, blue graphs represent cases $L = 400, 4000, 40000$ respectively.}
\label{fig:shapeover} 
\end{figure}

Figure \ref{fig:shapessp} presents the (empirical) average steady state density of SSP for system sizes 400, 4000, 40000, respectively. The values in the middle seem to stabilize at $\approx 176.6$, and in the extremes at $\approx 152.5$.

The obvious curiosity is the fall-off shape at the boundaries. Figure \ref{fig:shapeover} puts together the three graphs in Figure \ref{fig:shapessp} for a comparison. Here we can see that the boundary areas overlap, which indicate that the effect of a boundary is independent of system size. Also observe that, in Figure \ref{fig:shapeover}, the first half of the average output shape of $L = 400$ overlaps with the other shapes; the latter half, of course, must be the mirror image of the first half. Indeed, the first half of the average output shape of $L = 4000$ is also found to overlap with that of $L = 40000$.

At this point, we are able to only speculate as to why the average height diminishes near the boundaries, as follows. When a toppling occurs at one of the middle vertices, it is expected that some of the toppled sand comes back when the neighboring vertex topples. Some grains of sand may travel several steps away from where it started and then come back. However, when a toppling occurs at the left extreme, for example, the sand that moved to the left is lost forever.

Other stochastic sandpile models in dimension one, such as the abelian Manna model (\cite{SD09}) and the Oslo rice-pile model (\cite{GDM16}), have also been observed to exhibit similar-looking shapes. \cite{GDM16} attributes it to the finite-size scaling theory --- see equation (7) in \cite{GDM16}. However, in the SSP case, the same explanation seems to account for only the first 10 points or so away from the boundary.

It is also natural to ask whether the general features of the average steady state density are preserved if $\wp$ is replaced with something other than a uniform distribution. To this end, we ran several brief experiments with the Poisson distribution, and also the family of fat-tailed distributions of index $\alpha$ i.e. $\mathrm{Pr}(X > x) \sim x^{-\alpha}$, which is often used to produce unusual outcomes.\footnote{These distributions have infinite support, and thus they are not strictly covered by the theory developed here. However, the parallelepiped-pushing idea still goes through.} In all cases, we still observe both the flatness of the middle values and the diminishing amount of sand near the boundaries, suggesting that these are general properties of one-dimensional stochastic sandpiles. For instance, see Figure \ref{fig:shapeft}, the outcome for the case of fat-tailed distributions of indices $3$ and $4$, with mean $\approx 15, 13$ and standard deviation $\approx 17, 14$ respectively. It may be interesting to study how the average output shape of SSP and the representative values of $\wp$ are related; for example, it seems reasonable to conjecture that the mean of $\wp$ correlates with the pile height in the middle, and the standard deviation with the boundary effect.

\begin{figure}
\includegraphics[scale = 0.4]{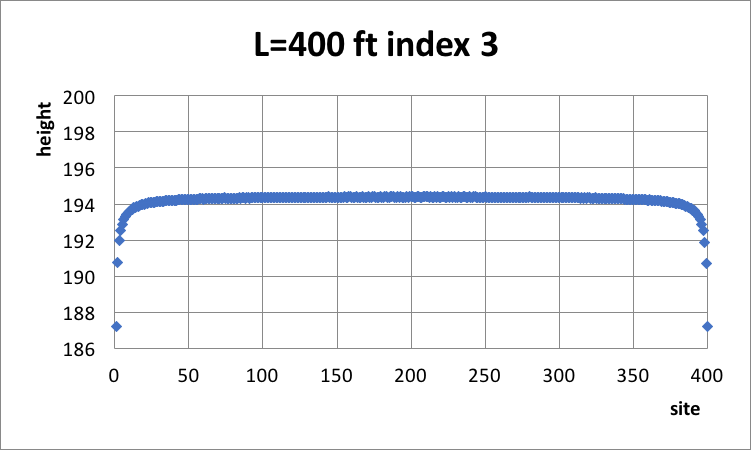}
\includegraphics[scale = 0.4]{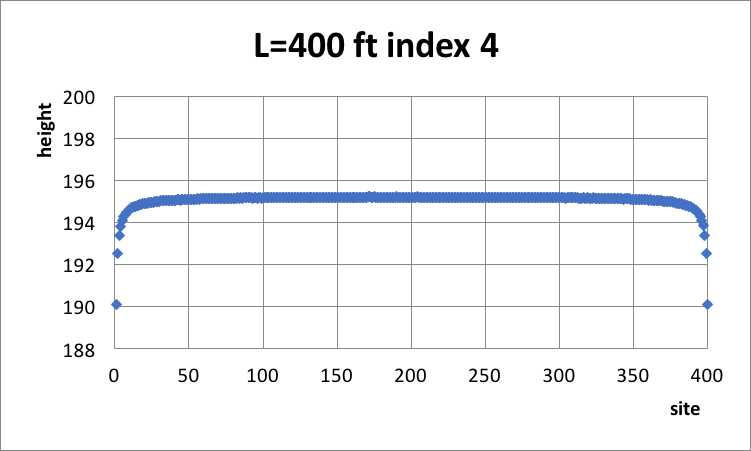}
\caption{The average steady state density of SSP, with respect to fat-tailed distributions of indices $3$ and $4$.}
\label{fig:shapeft} 
\end{figure}







\end{document}